\newtheorem{proposition}{Proposition}
\newtheorem{lemma}{Lemma}
\newtheorem{example}{Example}
\newenvironment{proof}{\noindent{\bf Proof:}}{\hfill\fbox{}\vspace*{1mm}}
\providecommand{\DIFdeltex}[1]{{\protect\color{red}\sout{#1}}}                      
\newif\ifdiff
  \newcommand{\del}[1]{\DIFdeltex{#1}}
  \newcommand{\del}[1]{}
\begin{document}
\title{\bf On Modeling Economic Default Time : A Reduced-Form Model Approach}
\author{
Jia-Wen Gu
\thanks{Advanced Modeling and Applied Computing Laboratory,
Department of Mathematics, The University of Hong Kong,
Pokfulam Road, Hong Kong. Email:jwgu.hku@gmail.com.
}
\and
Bo Jiang
\thanks{Advanced Modeling and Applied Computing Laboratory,
Department of Mathematics, The University of Hong Kong,
Pokfulam Road, Hong Kong. Email:sheilajiangbo@gmail.com.
}
\and Wai-Ki Ching
\thanks{Advanced Modeling and Applied Computing Laboratory, Department of Mathematics, The University of Hong Kong, Pokfulam
Road, Hong Kong.
E-mail: wching@hku.hk. Research supported in
part by GRF Grants, Hung Hing Ying Physical Research Grant and
HKU CRCG Grants.}
\and Harry Zheng
\thanks{ Department of Mathematics,
Imperial College, London, SW7 2AZ, UK. Email: h.zheng@imperial.ac.uk.}
}

\maketitle

\abstract{In the aftermath of the global financial crisis,
much attention has been paid to investigating the
appropriateness of the current practice of default risk modeling
in banking, finance and insurance industries.
A recent empirical study by Guo et al. (2008) \cite{Guo1} 
shows that the time difference between
the economic and recorded default dates has a significant impact on
recovery rate estimates.
Guo et al. (2011) \cite{Guo2} develop a
theoretical structural firm asset value model for a firm
default process that embeds the distinction of these two default times.
To be more consistent with the practice, in this paper,
we assume the market participants cannot observe the firm asset value directly and developed a reduced-form model to characterize the economic and recorded default times.
We derive the probability distribution of these two default times.
The numerical study on the difference between these two shows that
our proposed model can both capture the features and fit the empirical data.
}

\noindent
{\bf Keywords:}Economic Default time \and Reduced-form model \and Affine Jump Diffusion Model.

\section{Introduction}

Modeling default risk has long been an important problem in
both theory and practice of banking and finance.
Popular credit risk models currently used have their
origins in two major classes of models.
The first class of models was pioneered by Black and Scholes (1973) \cite{BS}
and  Merton (1974) \cite{Merton} and is called 
the structural firm value model.
The basic idea of the model is to describe explicitly the
relationship between the asset value and the default of a firm.
More specifically, the default of the firm
is triggered by the event that the asset value
of the firm falls below a certain threshold
level related to the liabilities of the firm.
The structural firm value model provides
the theoretical basis for the commercial KMV model
which has been widely used for default risk model
in the financial industry.
The second class of models was developed by Jarrow and Turnbull (1995) \cite{JT}
and Madan and Unal (1998) \cite{MU}
and is called the reduced-form credit risk model.
The basic idea of the model is to consider defaults as exogenous events
and to model their occurrences by using Poisson processes and their variants.

A recent empirical study by Guo, Jarrow and Lin (2008) \cite{Guo1}
on the time-series behavior of market debt prices around the
recorded default date reveals the fact that the market anticipates the default event
well before default is recorded.
Their statistical analysis shows that the time span between
the economic and recorded default dates has a significant impact on
recovery rate estimates and is important
to obtaining unbiased estimates
for defaultable bond prices.
Guo et al. (2011) \cite{Guo2} develop a
theoretical structural firm asset value model for a firm default
process that embeds a distinction between an economic 
and a recorded default time and study the probability 
distributions of the economic and recorded default times.

In this paper, to be more consistent with the market practice,
we assume that the market participants cannot observe the 
firm asset value directly,
instead, they are aware of the firm's operation state.
The firm's state process is characterized by 
a continuous-time Markov chain with stochastic transition rates.
By this assumption, our proposed model,
{different from the one proposed by
Guo et al. (2011) \cite{Guo2},  is a ``reduced-form'' model.
Under this framework, the economic and recorded default time is defined in a similar manner as in Guo et al. (2011) \cite{Guo2}.
We derive the probability law of the economic and recorded default time.
Numerical study reveals that our proposed models can better capture the features
given by empirical study in Guo et al. (2008) \cite{Guo1}.

The rest of the paper is organized as follows.
Section 2 provides a review on
Guo et al.'s structural firm asset value model \cite{Guo2}.
Section 3 gives the construction of our proposed reduced-form model.
Section 4 presents the main results of this paper concerning the
distribution of economic and recorded default time.
Section 5 provides the numerical illustrations on the computation of economic and recorded default time distribution.
Section 6 then concludes the paper.

\section{Literature Review}

Guo et al. (2008) \cite{Guo1} show that identifying the ``economic'' default date, as distinct from the recorded
default date, is crucial to obtaining unbiased recovery estimates. For most debt issues,
the economic default date occurs far in advance of the reported default date. An
implication is that the standard industry practice of using 30-day post default
prices to compute recovery rate yields biased estimates.
This result, unfortunately, reveals that the empirical studies investigating the
economic characteristics of industry based recovery rates are using biased data.
Hence, the study of the economic default date is essential and important.

To be more specific, Guo et al. (2008) \cite{Guo1}
proposed a recovery rate model which fits the stressed bond prices well
with an average pricing error of less than one basis point. 
In their model, the ``modified recovery rate'' process
is defined to price the stressed bonds as follows:
$$
R_s=\delta_s e^{-\int_{\tau_e}^s r_u du}, \  s > \tau_e
$$
where $\delta_s$ denotes the recovery rate process
and  $\tau_e$ is the economic default time.
We remark that $R_s$ implicitly depends on the economic default time.

In Guo et al.'s model \cite{Guo2},
for a given a filtered probability space
$(\Omega, \mathcal{F}, \mathcal{F}_t, P)$
that satisfies the usual conditions,
the value of the firm $S = (S_t)_{t \geq 0}$ follows a geometric L$\acute{e}$vy process together with its natural filtration $\mathcal{F}_t$.
The firm needs to make debt repayments at a predetermined (deterministic) set of discrete times, denoted by $N_1,N_2, N_3, \ldots,$.
For simplicity, let $N_k = kN$ for a fixed $N > 0$, at time $N_k$, the
amount of debt in the firm is $D_k$.
For simplicity, we assume that $D_k = D$ is constant over time.
Consistent with a structural model, the recorded default time $\tau_r$ is the first time that the firm is unable to make a debt repayment, i.e.,
$$
\tau_r= \inf \{N_k: S _k \leq D\}
$$
while economic default time to be the last time, before the onset
of recorded default, when the firm is able to make a debt repayment, i.e.,
$$
\tau_e= \sup\{t \in [\tau_r-N, \tau_r]: S_t \geq D \}.
$$
The following proposition, given by Guo et al. (2011) \cite{Guo2},
characterizes the distribution of the important quantity ($\tau_r-\tau_e$),
the time lap between the recorded default time and the economic default time.

\begin{proposition}\label{Prop1}
(Guo, Jarrow and Larrard (2011) \cite{Guo2})
Assume that $S=(S_t, t \geq 0)$ is a geometric spectrally positive L$\acute{e}$vy process, then
$$
P_x(\tau_r-\tau_e \in ds)=\int_D^{\infty} \sum_{n=1}^{\infty}\psi(u,s) u_n(x) P_x(S_{(n-1)N} \in du \mid \tau_r =nN)
$$
where $u_n(x) =P_x(\tau_r =nN)$ and
$$
\psi(x,s)=\int_0^N P_{(u,D)}(\tau_r-\tau_e \in ds \mid \tau_r=N) P_x(H_D \in du)
$$
where  $H_D=\inf\{t: S_t \leq D\}$
and $P_{(u, D)}$ denotes the distribution of $S$ starting from $D$ at time $t=u$.
\end{proposition}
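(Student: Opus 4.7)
The plan is to reduce the problem to the one-period window containing the default, by a two-stage conditioning: first on which discrete check-time $nN$ coincides with $\tau_r$, and then on the entry value $S_{(n-1)N}$ into the fatal interval. The building block $\psi(u,s)$ will then be isolated by a further conditioning on the first passage $H_D$, which is where spectral positivity of $S$ is essential.

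First I would partition on the value of $\tau_r$, which is supported on $\{nN:n\ge 1\}$:
$$P_x(\tau_r-\tau_e\in ds)=\sum_{n\ge1}P_x(\tau_r-\tau_e\in ds\mid\tau_r=nN)\,u_n(x).$$
On the event $\{\tau_r=nN\}$ we have $S_{(n-1)N}>D$, so conditioning further on $S_{(n-1)N}=u$ gives
$$P_x(\tau_r-\tau_e\in ds\mid\tau_r=nN)=\int_D^{\infty}P_x(\tau_r-\tau_e\in ds\mid\tau_r=nN,S_{(n-1)N}=u)\,P_x(S_{(n-1)N}\in du\mid\tau_r=nN).$$
The point of this step is that, by the (simple) Markov property of the L\'evy process at time $(n-1)N$, the conditional law of $\tau_r-\tau_e$ depends on the past only through $S_{(n-1)N}=u$, and is the same as the law of $\tau_r-\tau_e$ for the shifted one-period problem in which $S$ starts from $u$ at time $0$ and the next check-time is $N$. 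Denote this latter quantity by $\psi(u,s):=P_u(\tau_r-\tau_e\in ds\mid\tau_r=N)$; plugging in delivers the outer formula of the proposition.

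Next I would derive the claimed representation for $\psi(u,s)$. Under $P_u$ with $u>D$, the event $\{\tau_r=N\}=\{S_N\le D\}$ forces the first passage $H_D=\inf\{t:S_t\le D\}$ to lie in $[0,N]$. Here spectral positivity enters crucially: since $S$ has no negative jumps, the level $D$ is reached continuously, so $S_{H_D}=D$ almost surely on $\{H_D<\infty\}$. Conditioning on $H_D=v$ and applying the strong Markov property at $H_D$ reduces the post-$H_D$ behavior to a fresh copy of $S$ started from $D$ at time $v$, whose law is exactly $P_{(v,D)}$. Because $\tau_e\ge H_D$ (as $S_{H_D}=D\ge D$), the quantity $\tau_r-\tau_e$ is measurable with respect to the post-$H_D$ path, so
$$P_u(\tau_r-\tau_e\in ds\mid\tau_r=N,H_D=v)=P_{(v,D)}(\tau_r-\tau_e\in ds\mid\tau_r=N).$$
Integrating over $H_D\in dv$ produces the stated integral formula for $\psi$.

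The main obstacle I anticipate is bookkeeping the conditioning consistently: one must verify that $P_u(H_D\in dv)$ (rather than $P_u(H_D\in dv\mid\tau_r=N)$) is what appears in the definition of $\psi$, which requires checking that the normalizations implicit in the two conditional laws $P_u(\cdot\mid\tau_r=N)$ and $P_{(v,D)}(\cdot\mid\tau_r=N)$ combine correctly under the strong Markov decomposition. Apart from this accounting issue, the remaining steps are standard applications of the (strong) Markov property and the spectrally positive structure of $S$, which guarantees $S_{H_D}=D$ and hence a clean regeneration at the first passage.
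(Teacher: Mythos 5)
First, a point of reference: this proposition is quoted from Guo, Jarrow and de Larrard (2011) \cite{Guo2} and the present paper supplies no proof of it, so your argument can only be checked against the statement itself, not against an in-paper derivation. Your overall architecture --- partition on $\{\tau_r=nN\}$, condition on the entry value $S_{(n-1)N}=u$, invoke the Markov property to reduce to the one-period problem, then apply the strong Markov property at $H_D$ together with spectral positivity (no negative jumps, hence $S_{H_D}=D$ and a clean regeneration at level $D$) --- is the natural and evidently the intended route. The first stage is correct: since $\{\tau_r=nN\}$ splits into an $\mathcal{F}_{(n-1)N}$-measurable past constraint and the future event $\{S_{nN}\le D\}$, and $\tau_r-\tau_e$ is measurable with respect to the path on $[(n-1)N,nN]$, the conditional law given $\tau_r=nN$ and $S_{(n-1)N}=u$ is indeed the one-period quantity $P_u(\tau_r-\tau_e\in ds\mid\tau_r=N)$.

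The gap is the one you flag yourself and then leave unresolved, and it is not mere bookkeeping. The strong Markov decomposition at $H_D$ gives
$$
P_u(\tau_r-\tau_e\in ds\mid\tau_r=N)=\int_0^N P_{(v,D)}(\tau_r-\tau_e\in ds\mid\tau_r=N)\,P_u(H_D\in dv\mid\tau_r=N),
$$
whereas the displayed $\psi$ integrates against the unconditional law $P_u(H_D\in dv)$. These differ by the density $P_{(v,D)}(\tau_r=N)/P_u(\tau_r=N)$, because $P_u(H_D\in dv\mid\tau_r=N)=P_{(v,D)}(\tau_r=N)\,P_u(H_D\in dv)/P_u(\tau_r=N)$, and this factor does not cancel: $P_{(v,D)}(\tau_r=N)$ genuinely depends on $v$ (the closer $v$ is to $N$, the likelier a process started at $D$ at time $v$ is still at or below $D$ at time $N$). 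So either the statement as transcribed here suppresses this weight, or your $\psi$ must carry it explicitly; a complete proof has to take a position. The clean way is to work with joint (unnormalized) measures, writing $P_u(\tau_r-\tau_e\in ds,\ \tau_r=N)=\int_0^N P_{(v,D)}(\tau_r-\tau_e\in ds,\ \tau_r=N)\,P_u(H_D\in dv)$, and only at the very end divide by $P_x(\tau_r=nN)$, which makes transparent exactly which normalization the final formula requires. As it stands, your argument establishes the outer decomposition and a corrected inner formula, but not the proposition verbatim.
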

Suppose $(S_t, t \geq 0)$ is a geometric Brownian motion with zero drift, i.e.,
$$
S_t=\exp \left(\mu W_t- \frac{\mu^2 t}{2}\right)
$$
under the risk neutral measure with $W_t$ being a standard Brownian motion,
then we have
$$
P_{(u,D)}(\tau_r-\tau_e \in ds \mid \tau_r=N)
=\frac{ds}{\pi \sqrt{s(N-u-s)}}\phi \left(\frac{\mu}{2}\sqrt{N-u-s}\right),
$$
with
$$
\phi(a)=\int_0^{\infty} dt e^{-t} \cosh(a\sqrt{2t}).
$$
Therefore, the distribution of ($\tau_r-\tau_e$) is a mixture of arcsine law.
From the empirical study by Guo, Jarrow and Lin (2008) \cite{Guo1},
the density of time difference between the economic and the recorded default
has a ``$U$-shape'' in the time interval $[0,N]$,
while this feature can be well captured by the Arcsine law.

\section{The Reduced-Form Model}

We present our proposed reduced-form model in this section.
The distinction of the economic and recorded default time is also embeded.
We begin with a complete probability space $(\Omega, \mathcal{F}, \mathcal{F}_t, P)$.
Under this probability space, we are given a stochastic process $(X_t)_{t \geq 0}$,
right-continuous with left limits, representing the macroeconomic environment common factor.
We consider a firm with $K$ states, i.e., $1, 2, \ldots, K$,
where state $K$ represents the default state.
Let stochastic process $(S_t)_{t \geq 0}$ denotes the state process of the given firm and
we assume that $(S_t)_{t \geq 0}$ is a continuous-time Markov chain with stochastic transition rates,
i.e., $\lambda_{i,j}(X_s)$, where each $\lambda_{i,j}$ is
a bounded continuous function defined  on ${\mathbb R}$.
Heuristically, one can think of, $\lambda_{i,j}(X_s) \Delta t$ as the probability that a firm in state $i$ will
jump to state $j$ within the (small) time interval $\Delta t$.
With these notations,
the transition rate depends on the stochastic process $(X_s)_{s \geq 0}$ characterizing the common factor.
Let
$$
\lambda_i(X_s)=\sum_{k \neq i}\lambda_{i,k}(X_s), \quad i=1, 2, \ldots, K.
$$
Here $\lambda_i(X_s) \Delta t$ is the probability that a firm in state $i$ will
jump to different states within the (small) time interval $\Delta t$.

Here we redefine the economic and recored default time
under the given framework.
First, we assume the firm has to make certain required payment
at some fixed time, i.e.,
$0=N_0, N_1, \ldots, N_i, \ldots$.
For simplicity, we assume that the $N_i= i N$.
If the firm is in the ``default'' state at the payment date,
its payment will be missed.
The recorded default time $\tau_r$ is defined to be
$$\tau_r = \inf \{N_i: S_{N_i} = K\}$$
while the economic default time is defined to be
$$\tau_e = \sup \{t \leq \tau_r: S_t \neq K\}.$$
The information set available to the market participants up to time $t$ is then given by $$\mathcal{F}_t=\sigma(X_s, S_s, 0 \leq s \leq t).$$
For the ease of discussion, we also define
$$
\mathcal{G}_t=\sigma(X_s: 0 \leq s \leq t).
$$

\section{The Distribution of the Economic Default Time $\tau_e$}

In this section, we focus on finding the distributions of $\tau_r$ and $\tau_e$.
There are two cases to be discussed: constant transition rates and stochastic transition rates.
We begin with the following proposition which gives the probability law of the two random variables.

\begin{proposition}\label{Prop2}
For a non-negative integer $i$, we have
\begin{equation}\label{tau_e}
\begin{array}{lll}
 &&P(\tau_e \in (N_i, N_i +t] \mid \mathcal{G_{\infty}})\\
&=&\left(\displaystyle \prod_{j=0}^{i-1} P^{**}_X(N_j, N_{j+1})\cdot P^*_X(N_i, N_i +t)\right)_{S_0, K} \exp\left\{ -\int_{N_i+t}^{N_{i+1}} \lambda_K(X_u) du\right\}
\end{array}
\end{equation}
and
\begin{equation}\label{tau_r}
\begin{array}{lll}
P(\tau_r =N_{i+1}\mid \mathcal{G_{\infty}}) &=&\left(\displaystyle \prod_{j=0}^{i-1} P^{**}_X(N_j, N_{j+1})\cdot P^*_X(N_i, N_{i+1})\right)_{S_0, K}
\end{array}
\end{equation}
and
\begin{equation}\label{diff}
\begin{array}{lll}
 &&P(\tau_r-\tau_e > t \mid \mathcal{G_{\infty}})\\
&=&\displaystyle \sum_{i=0}^{\infty}\left(\prod_{j=0}^{i-1} P^{**}_X(N_j, N_{j+1})\cdot P^*_X(N_i, N_{i+1}-t)\right)_{S_0, K} \exp\left\{ -\int_{N_{i+1}-t}^{N_{i+1}} \lambda_K(X_u) du\right\}
\end{array}
\end{equation}
where conditioning on the underlying process $(X_t)_{t \geq 0}$,
 $P_{X}(s,t)$ denotes the transition probability matrix of the state process $(S_t)_{t \geq 0}$, i.e., the $(i,j)$ entry of $P_{X}(s,t)$
denotes the probability that the firm stays in state $j$ at time $t$ given that
the firm stays in state $i$ at time $s$.
$P^{*}_X(s, t)$ is the $(K-1)\times K$ matrix that results from deleting the $K$th row of $P_{X}(s,t)$ and $P^{**}_X(s, t)$ is the $(K-1)\times (K-1)$ matrix that results from deleting the $K$th column and $K$th row of $P_{X}(s,t)$.
\end{proposition}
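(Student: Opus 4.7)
The plan is to condition on $\mathcal{G}_\infty$ and exploit the fact that, given the entire path $(X_s)_{s \geq 0}$, the state process $(S_t)_{t \geq 0}$ is a time-inhomogeneous continuous-time Markov chain with instantaneous rates $\lambda_{i,j}(X_s)$ and transition kernel $P_X(s,t)$. All three identities then reduce to Chapman--Kolmogorov computations with this (random) kernel; the main work is to re-express the events involving $\tau_r$ and $\tau_e$ purely in terms of $S$ evaluated at payment dates or on short intervals, after which the Markov property at a suitable intermediate time produces the claimed matrix formulas.

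For (\ref{tau_r}) the definitions give $\{\tau_r = N_{i+1}\} = \{S_{N_j} \neq K,\ j = 0, 1, \ldots, i\} \cap \{S_{N_{i+1}} = K\}$. Applying the conditional Markov property successively at $N_1, N_2, \ldots, N_{i+1}$ and summing over all non-default intermediate paths $S_0 \to k_1 \to \cdots \to k_i \to K$ packages the intermediate ``survival'' steps into factors of $P^{**}_X(N_j, N_{j+1})$ (both row and column $K$ deleted, as both endpoints of each such step must avoid default), while the terminal step, in which the source state is still non-default but the target $K$ is allowed, is carried by the non-square factor $P^*_X(N_i, N_{i+1})$. Reading off the $(S_0, K)$ entry of the matrix product yields (\ref{tau_r}).

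For (\ref{tau_e}) I first claim the set-theoretic identity
\[
\{\tau_e \in (N_i, N_i + t]\} = \{S_{N_j} \neq K,\ j = 0, \ldots, i\} \cap \{S_s = K \text{ for all } s \in (N_i + t, N_{i+1}]\}.
\]
The first part forces $\tau_r \geq N_{i+1}$; the second, via right-continuity of $S$, forces $S_{N_i + t} = K$ and (taking $s = N_{i+1}$) $\tau_r = N_{i+1}$, so $\tau_e \leq N_i + t$. Conversely, when both conditions hold, $S_{N_i} \neq K$ together with right-continuity yields $\tau_e > N_i$. Splitting at time $N_i + t$ by the Markov property, the joint probability of the first part and $S_{N_i + t} = K$ is, exactly as in the argument for (\ref{tau_r}), the $(S_0, K)$ entry of $\prod_{j=0}^{i-1} P^{**}_X(N_j, N_{j+1}) \cdot P^*_X(N_i, N_i + t)$, while the conditional probability that $S$ does not leave state $K$ on $(N_i + t, N_{i+1}]$, given $S_{N_i + t} = K$ and $\mathcal{G}_\infty$, is the standard survival formula $\exp\{-\int_{N_i + t}^{N_{i+1}} \lambda_K(X_u) du\}$, since $\lambda_K(X_u)$ is the total exit rate from state $K$.

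Finally, (\ref{diff}) follows by partitioning $\{\tau_r - \tau_e > t\}$ according to $\tau_r = N_{i+1}$: on that piece the condition reads $\tau_e < N_{i+1} - t$, i.e., $\{S_{N_j} \neq K : j \leq i\} \cap \{S_s = K : s \in [N_{i+1} - t, N_{i+1}]\}$, which is precisely the event analyzed for (\ref{tau_e}) with $t$ replaced by $N - t$. Summing the corresponding expressions over $i \geq 0$ gives (\ref{diff}) in the nontrivial range $0 \leq t \leq N$ (outside of which the event is empty). The one genuinely delicate point, which I expect to be the main obstacle, is the right-continuity / endpoint bookkeeping that converts ``$S_s = K$ on a half-open interval $(a, b]$'' into the joint event ``$S_a = K$ and $S$ makes no exit from $K$ on $[a, b]$''; once that equivalence is in place, the memoryless survival-time formula for CTMCs and iterated Chapman--Kolmogorov complete the proof.
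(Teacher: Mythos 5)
Your proof is correct and follows essentially the same route as the paper's Appendix A: rewrite $\{\tau_e \in (N_i, N_i+t]\}$ as survival of the chain at the payment dates followed by absorption in state $K$ throughout $[N_i+t, N_{i+1}]$, apply the conditional Markov property given $\mathcal{G}_\infty$ to produce the $(S_0,K)$ entry of the matrix product, attach the survival factor $\exp\{-\int_{N_i+t}^{N_{i+1}}\lambda_K(X_u)\,du\}$, and obtain (\ref{tau_r}) and (\ref{diff}) by the same specialization ($t=N$) and summation over $i$. Your explicit set-theoretic identification of the events and the right-continuity bookkeeping are in fact more detailed than what the paper records.
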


\begin{proof}
See Appendix A.
\end{proof}

From Proposition \ref{Prop1}, one can see that the probability law of $\tau_r$ and $\tau_e$
depends on the transition matrix $P_X(s, t)$.
In the following, we discuss the issue of calculating  $P_X(s, t)$ in different cases.

\subsection{Constant Transition Rates}

In this subsection, we assume that the underlying stochastic process is degenerate,
which means that $X_u=c, u \geq 0$ for some constant $c$.
Let $\lambda_{i, j}(c)=\lambda_{i, j}$ and $\lambda_i(c)=\lambda_i$
for all $i, j$ and $P_X(s, t)=P(s, t)$. Let
$$
A=\left(
\begin{array}{cccccccccccc}
			-\lambda_1& \lambda_{1,2}& \lambda_{1,3}&\ldots& \ldots &\lambda_{1,K}\\
			\lambda_{2,1}& -\lambda_2& \lambda_{2,3}&\ldots& \ldots &\lambda_{2,K}\\
			\lambda_{3,1}& \lambda_{3,2}& -\lambda_{3}&\ldots& \ldots &\lambda_{3,K}\\
			\vdots            &\vdots      &\ddots  &\ddots &\vdots             &\vdots\\
	\lambda_{K-1,1}& \lambda_{K-1,2}&\ldots& \ldots &-\lambda_{K-1}&\lambda_{K-1,K}\\
			\lambda_{K,1}& \lambda_{K,2}&\ldots& \ldots &\lambda_{K,K-1}&-\lambda_{K}\\
\end{array}\right)
$$
then by Kolmogorov's backward equations, one can obtain
\begin{equation}\label{de}
\frac{\partial P(s,t)}{\partial s}=-A P(s,t).
\end{equation}
Solving these equations, we obtain
$$
P(s,t) = \exp \left( A (t-s) \right).
$$
In the following, we give an example of two states.

\begin{example}
In this example, we assume that the firm's state process
follows a two-state continuous-time Markov chain
with normal state ``$1$'' and default state ``$2$''.
 The transition rate is given by $\lambda_1$ and $\lambda_2$,
hence
$$
A=\left(
\begin{array}{cc}
-\lambda_1 & \lambda_1\\
\lambda_2 & -\lambda_2
\end{array}\right)
$$
and
$$
P(s, t)=\left(
\begin{array}{cc}
\frac{\lambda_1}{\lambda_1+\lambda_2}e^{-(\lambda_1+\lambda_2)(t-s)} +\frac{\lambda_2}{\lambda_1+\lambda_2} &-\frac{\lambda_1}{\lambda_1+\lambda_2}e^{-(\lambda_1+\lambda_2)(t-s)} +\frac{\lambda_1}{\lambda_1+\lambda_2}\\
-\frac{\lambda_2}{\lambda_1+\lambda_2}e^{-(\lambda_1+\lambda_2)(t-s)} +\frac{\lambda_2}{\lambda_1+\lambda_2}&
 \frac{\lambda_2}{\lambda_1+\lambda_2}e^{-(\lambda_1+\lambda_2)(t-s)} +\frac{\lambda_1}{\lambda_1+\lambda_2}
\end{array}\right)
$$
By Proposition 1, one obtains
\begin{equation}\label{e11}
P(\tau_e \in (N_i, N_i +t] )=\left(\frac{\lambda_1}{\lambda_1+\lambda_2}e^{-(\lambda_1+\lambda_2)N} +\frac{\lambda_2}{\lambda_1+\lambda_2}\right)^i
\left(\frac{\lambda_1}{\lambda_1+\lambda_2}
-\frac{\lambda_1}{\lambda_1+\lambda_2}e^{-(\lambda_1+\lambda_2)t}\right)e^{-\lambda_2(N-t)}.
\end{equation}
and
\begin{equation}\label{e22}
P(\tau_r =N_{i+1} )=\left(\frac{\lambda_1}{\lambda_1+\lambda_2}e^{-(\lambda_1+\lambda_2)N} +\frac{\lambda_2}{\lambda_1+\lambda_2}\right)^i
\left(-\frac{\lambda_1}{\lambda_1+\lambda_2}e^{-(\lambda_1+\lambda_2)N} +\frac{\lambda_1}{\lambda_1+\lambda_2}\right)
\end{equation}
and
\begin{equation}\label{e33}
\displaystyle
P(\tau_r-\tau_e > t)=\frac{e^{-\lambda_2 t}-e^{-(\lambda_1+\lambda_2)N}e^{\lambda_1 t}}
{1-e^{-(\lambda_1+\lambda_2)N} } .
\end{equation}
\end{example}

\subsection{Stochastic Transition Rates}

We define the following matrix
$$
A_X(s)=\left(
\begin{array}{cccccccccccc}
			-\lambda_1(X_s)& \lambda_{1,2}(X_s)& \lambda_{1,3}(X_s)&\ldots& \ldots&\lambda_{1,K}(X_s)\\
			\lambda_{2,1}(X_s)& -\lambda_2(X_s)& \lambda_{2,3}(X_s)&\ldots & \ldots &\lambda_{2,K}(X_s)\\
			\lambda_{3,1}(X_s)& \lambda_{3,2}(X_s)& -\lambda_{3}(X_s)&\ldots& \ldots& \lambda_{3,K}(X_s)\\
			\vdots            &\vdots        &\ddots  &\ddots &\vdots             &\vdots\\
			\lambda_{K-1,1}(X_s)& \lambda_{K-1,2}(X_s)&\ldots & \ldots &-\lambda_{K-1}(X_s)&\lambda_{K-1,K}(X_s)\\
			\lambda_{K,1}(X_s)& \lambda_{K,2}(X_s)&\ldots& \ldots& \lambda_{K,K-1}(X_s)&-\lambda_{K}(X_s)\\
\end{array}\right)
$$
and we obtain
\begin{equation}\label{de2}
\frac{\partial P_{X}(s,t)}{\partial s}=- A_X(s) P_{X}(s,t).
\end{equation}
As shown in Lando (1998) \cite{Lando}, in general,
$$
P_{X}(s,t) \neq \exp \left[\int_s^t A_X(u)du\right].
$$
Hence we adopt the special structure of $A_X(s)$ in Lando (1998) \cite{Lando}
by assuming that
$$A_X(s) =B \mu(X_s)B^{-1},$$
where
$\mu(X_s)$ denotes the $K\times K$ diagonal matrix
$$
{\rm diag} (\mu_1(X_s), \ldots, \mu_{K-1} (X_s), \mu_K(X_s))
$$
with $\mu_K(X_s)=0$, and $B$ denotes the $K \times K$ matrix whose columns consist of $K$ eigenvectors of $A_X(s)$. Let
$$
E_X(s, t) = {\rm diag} \left(\exp \left[\int_s^t \mu_1(X_u) du\right], \ldots, \exp \left[\int_s^t \mu_{K-1}(X_u) du \right], \exp \left[\int_s^t \mu_K(X_u) du\right]\right).
$$
Then one can obtain the following lemma.

\begin{lemma}
We have
$$
P_X(s, t ) = B E_X(s, t) B^{-1}
$$
satisfying 
Eq. (\ref{de2}) and is the desired transition probability matrix.
\end{lemma}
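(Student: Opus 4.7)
The plan is a direct verification argument: show that the candidate $P_X(s,t)=BE_X(s,t)B^{-1}$ solves the Kolmogorov backward equation (\ref{de2}) with the correct terminal condition $P_X(t,t)=I$, and then invoke uniqueness of the solution to conclude that it is the transition probability matrix.

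First I would differentiate $P_X(s,t)$ in $s$. Since $B$ is assumed independent of $s$ (this is the whole point of the Lando-type structural assumption $A_X(s)=B\mu(X_s)B^{-1}$), the product rule gives $\partial_s P_X(s,t) = B\,\partial_s E_X(s,t)\,B^{-1}$. The matrix $E_X(s,t)$ is diagonal, so one can differentiate entry by entry: the fundamental theorem of calculus yields $\partial_s\exp\bigl[\int_s^t\mu_i(X_u)\,du\bigr] = -\mu_i(X_s)\exp\bigl[\int_s^t\mu_i(X_u)\,du\bigr]$, which in matrix form reads $\partial_s E_X(s,t) = -\mu(X_s)E_X(s,t)$. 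Substituting and inserting $B^{-1}B=I$ in the middle gives
$$
\partial_s P_X(s,t) \;=\; -B\mu(X_s)E_X(s,t)B^{-1} \;=\; -\bigl(B\mu(X_s)B^{-1}\bigr)\bigl(BE_X(s,t)B^{-1}\bigr) \;=\; -A_X(s)P_X(s,t),
$$
which is exactly (\ref{de2}). The terminal condition is immediate since $E_X(t,t)$ is the identity and so $P_X(t,t)=BIB^{-1}=I$.

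Next I would argue uniqueness. Conditional on the path of $(X_u)_{u\ge 0}$, the generator $A_X(\cdot)$ is a deterministic, bounded, measurable matrix-valued function; the linear backward ODE with bounded coefficients has a unique solution matching the terminal data $P_X(t,t)=I$. Since the transition probability matrix of the inhomogeneous chain with generator $A_X(\cdot)$ is, by definition, the unique such solution, it must coincide with $BE_X(s,t)B^{-1}$.

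The step I would expect to need the most care is the legitimacy of the structural hypothesis $A_X(s)=B\mu(X_s)B^{-1}$ with an $s$-independent $B$: it is this feature, not the computation itself, that makes the argument go through, since without it the usual exponential ansatz fails (as recalled via $P_X(s,t)\ne\exp[\int_s^t A_X(u)du]$). One should also note in passing that the assumption $\mu_K(X_s)=0$ ensures state $K$ appears as an eigenvalue-$0$ mode, consistent with $K$ being absorbing, and that the resulting $P_X(s,t)$ inherits stochastic-matrix properties (nonnegativity and row sums equal to one) from being the fundamental solution of a conservative generator; this last point can be mentioned briefly or relegated to a standard reference rather than reproved.
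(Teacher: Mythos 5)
Your proposal is correct and follows essentially the same route as the paper, which simply defers to ``the similar argument in Lando (1998)'' --- and Lando's argument is precisely the direct verification you carry out: differentiate $BE_X(s,t)B^{-1}$ in $s$ using the $s$-independence of $B$, check that it satisfies Eq.~(\ref{de2}) with $P_X(t,t)=I$, and invoke uniqueness of the solution of the linear backward ODE with bounded coefficients. Your write-up is in fact more complete than the paper's one-line citation, and you correctly isolate the structural assumption $A_X(s)=B\mu(X_s)B^{-1}$ with constant $B$ as the step that makes the computation legitimate.
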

\begin{proof}
By using the similar argument in Lando (1988) \cite{Lando}.
\end{proof}

\subsubsection{An Affine Jump Diffusion Model for $(X_s)_{s \geq 0}$}

In this subsection, we adopt an affine jump diffusion process to characterize the dynamics of $(X_s)_{s \geq 0}$.
As we know, the basic affine process is attractive in modeling credit risk for its tractability,  see for instance Duffie and Kan (1996) \cite{DK} and
Duffie and G$\hat{a}$rleanu (2001) \cite{DG}.
We assume that
\begin{equation}\label{dyn}
d X_t=\kappa(\theta-X_t)dt + \sigma \sqrt{X_t} dB_t + d J_t
\end{equation}
where $B_t$ is a standard Brownian motion and $$J_t=\sum_{i=1}^{N(t)}Z_i$$
with $N(t)$ being counting jumps in Poisson with intensity $\lambda$ and $\{Z_i\}_{i=1}^{\infty}$ a sequence of i.i.d. exponentials with mean $\gamma$.
Then the expectation
\begin{equation}\label{exp}
E\left[e^{\int_t^T RX_u du +w X_T} \mid \mathcal{G}_t\right]=e^{\alpha(T-t; R,w)+\beta(T-t; R,w)X_t},
\end{equation}
where $R, w$ are constants and $\alpha, \beta$ are coefficient functions satisfying the ODEs
$$
\left\{
\begin{array}{lll}
\displaystyle \frac{d{\alpha}(s; R,w)}{ds}&=& \displaystyle \kappa \theta {\beta}(s; R,w)+\frac{\lambda \gamma {\beta}(s; R,w)}{1-\gamma {\beta}(s; R,w)}\\
\displaystyle \frac{d{\beta}(s; R,w)}{ds} &=& \displaystyle -\kappa {\beta}(s; R,w) +\frac{1}{2}\sigma^2 {\beta}(s; R,w)^2(s)+R
\end{array}
\right.
$$
with $\alpha(0; R,w)=0$ and $\beta(0; R,w)=w$.
The explicit form of $\alpha(s; R,w)$ and $\beta(s; R,w)$
is given by Duffie and G$\hat{a}$rleanu (2001) \cite{DG}.
The solution to $\beta(s; R, w)$ is given by
$$
\beta(s; R, w)=\frac{1+a e^{b s}}{c+d e^{b s}}
$$
where the coefficients depend on $R$ and $w$,
$$
\left\{
\begin{array}{lll}
a&=&(d+c)w-1\\
b&=& \displaystyle \frac{d(-\kappa +2 Rc)+a(-\kappa c+\sigma^2)}{ac-d}\\
c&=& \displaystyle \frac{\kappa+\sqrt{\kappa^2-2 R \sigma^2}}{2 R}\\
d&=& \displaystyle (1-cw)\frac{-\kappa+\sigma^2 w+\sqrt{(-\kappa+\sigma^2 w)^2-\sigma^2}}{-2\kappa w+\sigma^2 w^2 +2R}
\end{array}
\right.
$$
and $\alpha(s; R, w)$ follows from solving the ODE by substituting $\beta(s; R,w)$.

In what follows, we implement the calculation of distribution
of $\tau_e$ and $\tau_r$ given the dynamics of $(X_s)_{s \geq 0}$ as in
Eq. (\ref{dyn}).
We assume that $\mu_i(X_s)=\mu_i X_s$ with $\mu_i$ being a constant for $i=1, 2, \ldots, K-1$, and $\mu_K=0$.
Although the computational method works in multi-state case,
here for simplicity of discussion, we assume that $K=2$, i.e.,
the operation state of a firm either ``normal'' or ``default''.
Before we state the main result of this subsection,
we have the following observations:
$$
P^{**}_X(s,t)=B^* E_X(s,t) B^{-1}_*
\quad {\rm and} \quad
P^{*}_X(s,t)=B^* E_X(s,t) B^{-1}
$$
where $B^*$ denotes the $(K-1)\times K$ matrix that
results from deleting the $K$th row of $B$,
$B^{-1}_*$ denotes the $K\times (K-1)$ matrix
that results from deleting the $K$th column of $B^{-1}$.
When $K=2$,
$$
P^{**}_X(s,t)=m_1\exp \left[\int_s^t \mu_1(X_u) du\right] +m_2 \exp \left[\int_s^t \mu_2(X_u) du\right]
$$
where $m_1=b_{11}b^{(-1)}_{11}$ and $m_2=b_{12}b^{(-1)}_{21}$ with $b_{i,j}=B_{i,j}$ and $b^{(-1)}_{ij}=B^{-1}_{i,j}$.
We have
$$
\begin{array}{lll}
&&P^{*}_X(s,t)\\
&=&\left(m_1\exp [\int_s^t \mu_1(X_u) du] +m_2 \exp [\int_s^t \mu_2(X_u) du],
n_1\exp [\int_s^t \mu_1(X_u) du] +n_2 \exp [\int_s^t \mu_2(X_u) du]\right)
\end{array}
$$
where $n_1=b_{11}b^{(-1)}_{12}$ and $n_2=b_{12}b^{(-1)}_{22}$.
And
$$
\lambda_2(X_s)=-p_1 \mu_1(X_u) -p_2  \mu_2(X_u)
$$
where $p_1=b_{21}b^{(-1)}_{12}$ and $p_2=b_{22}b^{(-1)}_{22}$.
Let
$$
\hat{E}_i:=\{{\bf e}=(e_0,e_1, \ldots, e_{i}): e_k \in \{1, 2\}\}.
$$
For each ${\bf e} \in \hat{E}_i$, let
$$
\hat{m}({\bf e})=n_{e_i}\prod_{j=0}^{i-1}m_{e_j}
$$
and
$$
\begin{array}{lll}
\hat{\mu}({\bf e}, s) &=&1_{\{s \in[N_i+t, N_{i+1})\}}[p_1\mu_1(X_s)+p_2\mu_2(X_s)]+1_{\{s \in[N_i, N_i+t)\}}\mu_{e_i}(X_s) \\
&&+ \displaystyle \sum_{j=0}^{i-1} 1_{\{s \in[N_j, N_{j+1})\}}\mu_{e_j}(X_s).
\end{array}
$$

\begin{proposition}\label{Prop3}
If $K=2$, $\mu_i(X_s)=\mu_i X_s$ with $\mu_1$ being a constant, $\mu_2=0$,
the distribution of $\tau_e$  is given by
\begin{equation}\label{1st}
P(\tau_e \in(N_i, N_i+t])=\sum_{{\bf e}
\in \hat{E}_i} \hat{m}({\bf e})\left(\prod_{j=0}^{i+1} v_j({\bf e})\right) \exp[\beta(N; R_0({\bf e}), w_0({\bf e})) X_0]
\end{equation}
where $R_j, w_j, v_j$ are defined in Appendix B.1.
The distribution of $\tau_r$ and the difference $\tau_r-\tau_e$ are given by,
\begin{equation}\label{2nd}
P(\tau_r =N_{i+1}) = P(\tau_e \in (N_i, N_{i+1}] ).
\end{equation}
and
\begin{equation}\label{3rd}
P(\tau_r-\tau_e > t ) = \sum_{i=0}^{\infty} P(\tau_e \in (N_i, N_{i+1}-t] ).
\end{equation}
\end{proposition}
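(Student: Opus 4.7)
The plan is to take expectation of the conditional formulas in Proposition~\ref{Prop2}, exploit the explicit $K=2$ decompositions listed just above the statement, and apply the affine closed-form identity (\ref{exp}) iteratively over the subintervals $[N_j, N_{j+1}]$.

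First I would start from (\ref{tau_e}) with $K=2$ and substitute the two-term decompositions of $P^{**}_X(s,t)$, $P^*_X(s,t)$, and $\lambda_2(X_u)$ in terms of $m_k, n_k, p_k$ and the exponentials of integrals of $\mu_k(X_u)$. Expanding the product $\prod_{j=0}^{i-1} P^{**}_X(N_j, N_{j+1}) \cdot P^*_X(N_i, N_i+t)$ as a sum of $2^{i+1}$ exponential terms, indexed by ${\bf e} = (e_0, \ldots, e_i) \in \hat{E}_i$, where $e_j$ selects one of the two exponentials in the $j$-th factor, and folding in the $\lambda_2$ contribution on the rightmost subinterval $[N_i+t, N_{i+1}]$, I would obtain
$$P(\tau_e \in (N_i, N_i+t] \mid \mathcal{G}_\infty) = \sum_{{\bf e} \in \hat{E}_i} \hat{m}({\bf e}) \exp\left\{\int_0^{N_{i+1}} \hat{\mu}({\bf e}, u)\, du\right\},$$
with $\hat{\mu}({\bf e}, u)$ precisely the piecewise $R \cdot X_u$ coefficient defined above the statement.

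Next I would take the unconditional expectation and invoke the tower property, applying (\ref{exp}) one subinterval at a time, starting from the right with $w = 0$. On $[N_i+t, N_{i+1}]$ this produces $\exp[\alpha(N-t;\cdot) + \beta(N-t;\cdot) X_{N_i+t}]$; the resulting $\beta$ then plays the role of $w$ in (\ref{exp}) on the preceding subinterval $[N_i, N_i+t]$, yielding a new $(\alpha, \beta)$ pair; iterating backwards through the $i$ full subintervals all the way to $N_0 = 0$ collapses the nested conditional expectations into a product of $v_j({\bf e}) = \exp[\alpha(\cdot; R_j({\bf e}), w_j({\bf e}))]$ over $j = 0, \ldots, i+1$, times a final $\exp[\beta(N; R_0({\bf e}), w_0({\bf e})) X_0]$. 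This delivers (\ref{1st}). For (\ref{2nd}), setting $t = N$ in (\ref{1st}) empties the rightmost subinterval, making the $\lambda_2$-exponential equal $1$, and (\ref{tau_r}) identifies the outcome with $P(\tau_r = N_{i+1})$. For (\ref{3rd}), comparing (\ref{diff}) with (\ref{tau_e}) under the substitution $t \mapsto N-t$ (so $N_i + t \mapsto N_{i+1} - t$) shows directly that $P(\tau_r - \tau_e > t \mid \mathcal{G}_\infty) = \sum_{i \geq 0} P(\tau_e \in (N_i, N_{i+1} - t] \mid \mathcal{G}_\infty)$; taking expectation and reusing (\ref{1st}) with $N-t$ in place of $t$ yields the claim.

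The main obstacle is the bookkeeping in the backward iteration: the parameters $R_j({\bf e}), w_j({\bf e}), v_j({\bf e})$ each depend both on the multi-index ${\bf e}$ (through which of $\mu_1, \mu_2$ is selected on subinterval $j$) and, recursively, on the $\beta$ produced by the subinterval immediately to its right. Setting up these recursions cleanly so that they coincide with the quantities $R_j, w_j, v_j$ tabulated in Appendix B.1 is where the real work lies; the remaining steps are routine substitution into (\ref{exp}) together with the explicit formulas for $\alpha, \beta$.
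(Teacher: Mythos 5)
Your proposal is correct and follows essentially the same route as the paper's Appendix B.1: expand the conditional formula from Proposition~\ref{Prop2} into the $2^{i+1}$ exponential terms indexed by ${\bf e}\in\hat{E}_i$, then collapse the unconditional expectation by iterated conditioning with the affine identity (\ref{exp}), working backwards from the rightmost subinterval with $w=0$ so that each $\beta$ feeds the next step's $w$. The paper simply declares (\ref{2nd}) and (\ref{3rd}) obvious where you spell them out, but the substance is identical.
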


\begin{proof}
See Appendix B.1.
\end{proof}

We note that when conducting the numerical experiment,
we apply Eq. (\ref{3rd}) to approximate $P(\tau_r-\tau_e >t)$,
where the error is given by
$$
\left|P(\tau_r-\tau_e>t)-\sum_{i=0}^k P(\tau_e\in(N_i,N_{i+1}-t])\right|
< P(\tau_r > N_{k+1}) \rightarrow 0
$$
as $k\rightarrow \infty$.
For the ease of computing the probability $P(\tau_e \in(N_i, N_i+t])$, we establish the following.
\begin{proposition}
If $K=2$, $\mu_i(X_s)=\mu_i X_s$ with $\mu_1$ being a constant, $\mu_2=0$,
the distribution of $\tau_e$  is given by
\begin{equation}\label{1st1}
P(\tau_e \in(N_i, N_i+t])=\sum_{j=1}^{2^{i+1}}a_{i,j}\exp(b_{i,j}X_0),
\end{equation}
where
$$
\begin{array}{rcl}
a_{i+1,j}&=&\left\{
\begin{array}{ll}
m_1 a_{i,j} \exp(\alpha(N,\mu_1,b_{i,j})),& j=1,2,\ldots, 2^{i+1}\\
m_2 a_{i, j-2^{i+1}} \exp(\alpha(N,\mu_2,b_{i,j-2^{i+1}})),& j=2^{i+1}+1,2^{i+1}+2,\ldots, 2^{i+2}
\end{array}
\right.
\\[3mm]
b_{i+1,j}&=&\left\{\begin{array}{ll}
\beta(N,\mu_1,b_{i,j}), &j=1,2,\ldots, 2^{i+1}\\
\beta(N,\mu_2,b_{i,j-2^{i+1}}), &j=2^{i+1}+1,2^{i+1}+2,\ldots, 2^{i+2}
\end{array}
\right.
\end{array}
$$
and 
$$\left\{
\begin{array}{l}
a_{0,1}=n_1 \exp[\alpha(N-t, p_1\mu_1+p_2\mu_2, 0)\alpha(t, \mu_1, \beta(N-t, p_1\mu_1+p_2\mu_2, 0))]\\
a_{0,2}=n_2 \exp[\alpha(N-t, p_1\mu_1+p_2\mu_2, 0)\alpha(t, \mu_2, \beta(N-t, p_1\mu_1+p_2\mu_2, 0))]\\
b_{0,1}=\beta(t, \mu_1, \beta(N-t, p_1\mu_1+p_2\mu_2, 0))\\
b_{0,2}=\beta(t, \mu_2, \beta(N-t, p_1\mu_1+p_2\mu_2, 0)).
\end{array}\right.
$$
\end{proposition}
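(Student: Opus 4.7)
The plan is to derive the stated recursive representation by combining the formula of Proposition \ref{Prop3} with iterated backward applications of the affine transform identity (\ref{exp}), exploiting the Markov property and time-homogeneity of $(X_s)_{s\ge 0}$. The workhorse is that for any interval $[s,s+L]$, any constant $R$ and any scalar $w$,
\[
E\left[\exp\left(\int_s^{s+L}\! R X_u\,du + w X_{s+L}\right)\mid\mathcal{G}_s\right]=\exp\bigl(\alpha(L;R,w)+\beta(L;R,w)X_s\bigr),
\]
so each backward conditioning over one piece of the piecewise-affine integrand $\hat\mu({\bf e},s)X_s$ consumes the existing exponent $w$ on $X$, absorbs an $\exp(\alpha(\cdot))$ prefactor, and produces a new exponent $\beta(\cdot)$ on $X$ at the left endpoint of that piece.

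For the base case $i=0$, the sum in (\ref{1st}) has two terms indexed by $e_0\in\{1,2\}$, with $\hat{m}((e_0))=n_{e_0}$, integrand $\mu_{e_0}X_s$ on $[0,t)$ and $(p_1\mu_1+p_2\mu_2)X_s$ on $[t,N)$. I will compute the expectation by applying the identity above first on $[t,N]$ with $(R,w)=(p_1\mu_1+p_2\mu_2,0)$, yielding an exponent $\beta_1:=\beta(N-t;p_1\mu_1+p_2\mu_2,0)$ on $X_t$ and a prefactor $\exp(\alpha(N-t;p_1\mu_1+p_2\mu_2,0))$, and then on $[0,t]$ with $R=\mu_{e_0}$ and $w=\beta_1$. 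Multiplying by $n_{e_0}$ and relabelling $e_0=1,2$ as $j=1,2$ reproduces the stated $(a_{0,j},b_{0,j})$.

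For the inductive step I identify $\hat{E}_{i+1}$ with $\{1,2\}\times\hat{E}_i$ via ${\bf e}'=(e'_0,{\bf e})$, ${\bf e}\in\hat{E}_i$. A direct check of the definitions yields $\hat{m}({\bf e}')=m_{e'_0}\hat{m}({\bf e})$, and shows that $\hat\mu({\bf e}',s)X_s$ equals $\mu_{e'_0}X_s$ on $[0,N)$ and, on $[N,N_{i+2})$, is the time-$N$ shift of the level-$i$ integrand $\hat\mu({\bf e},s)X_s$ on $[0,N_{i+1})$. Conditioning on $\mathcal{G}_N$ and invoking time-homogeneity of $X$, the inner expectation over the shifted pieces equals the level-$i$ functional evaluated at the starting point $X_N$; by the induction hypothesis the sum of $\hat{m}({\bf e})$ times this over ${\bf e}\in\hat E_i$ equals $\sum_{j=1}^{2^{i+1}} a_{i,j}\exp(b_{i,j}X_N)$. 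One final application of the identity on $[0,N]$ with $R=\mu_{e'_0}$ and $w=b_{i,j}$, followed by the prefactor $m_{e'_0}$, produces exactly $m_{e'_0}a_{i,j}\exp(\alpha(N;\mu_{e'_0},b_{i,j}))\exp(\beta(N;\mu_{e'_0},b_{i,j})X_0)$. Writing the resulting $2^{i+2}$ contributions in the order ``first $e'_0=1$, then $e'_0=2$'' yields the stated recursion.

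The only nontrivial obstacle is the bookkeeping: verifying that the $v_j({\bf e})$ and $\hat{m}({\bf e})$ factors appearing in Proposition \ref{Prop3} pair up cleanly with the $\exp(\alpha)$-prefactor and the $m_{e'_0}$-factor produced at each induction step, and that the halving index map $j\leftrightarrow j-2^{i+1}$ matches the two choices of $e'_0$. Once those identifications are pinned down, the recursion follows directly from the Markov property and (\ref{exp}).
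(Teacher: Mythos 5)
Your proposal is correct and follows essentially the same route as the paper's own proof: the paper defines $H_i(X_0,t)=P(\tau_e\in(N_i,N_i+t])$, peels off the first period by conditioning on $\mathcal{F}_{N_1}$ to get $H_i(X_0,t)=E\bigl[\bigl(m_1e^{\int_0^N\mu_1(X_u)du}+m_2e^{\int_0^N\mu_2(X_u)du}\bigr)H_{i-1}(X_{N_1},t)\bigr]$ and then applies the affine transform (\ref{exp}), which is exactly your induction step with conditioning on $\mathcal{G}_N$, and its base case is the same two-stage backward application of (\ref{exp}) read off from Proposition \ref{Prop3}. The only point worth flagging is that your base-case computation yields $a_{0,j}=n_j\exp[\alpha(N-t;\cdot)+\alpha(t;\cdot)]$, i.e.\ a sum of the two $\alpha$-terms in the exponent, which is the correct form; the product appearing in the statement's $a_{0,j}$ is evidently a typographical slip.
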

\begin{proof}
See Appendix B.2.
\end{proof}

\section{Numerical Experiments and Discussions}

{In this section, we first discuss the constant intensity rate model.
The model parameters can be solved by employing the maximum likelihood approach.
We state the sufficient conditions for the density function to have a
``$U$-shape''.
Numerical results are then given to demonstrate the model.
However, the constant intensity model does not fit the real data very
well though it has the ``$U$-shape'' property.
We then present the numerical results for the stochastic intensity 
intensity model.
It is found by varying the parameters $\kappa, \gamma$ and $\sigma$,
different ``$U$-shape'' density functions can be obtained.
Thus it is clear that the stochastic intensity rate model can better
fit the real data as it includes the constant rate intensity model
as its particular case.

In the stochastic intensity rate model, 
we note that if the mean-reverting rate $\kappa$ is getting
large, the effect of stochastic part will be diminished.
Eventually the process will be dominated by deterministic part
$d X_t=\kappa (\theta - X_t)dt$.
The parameter $\kappa$ characterized the internal factor of
the firm default process.
One expects that when $\kappa$ increases,
the distribution seems to converge to certain ``$U$-shape'' function
and this is consistent with the results in Figure 2.

The parameter $\gamma$, 
the mean jump size of the jump process $J_t$
which is a positive quantity, can be regarded as the severity of
an external event causing the stress.
We remark that sign of the jump is always positive.
The larger the value is, the more likely that the time lap between the
economic default time and the recorded time is short.
Thus we expect that  when $\gamma $ increases,
the distribution will have a flatter and flatter tail and this
is consistent with the results in Figure 3.

Finally, the non-negative parameter $\sigma$ controls 
the effect of the stochastic part of a Brownian motion $\sigma dW$ 
which can be positive or negative and it 
represents the external market risk.
We expect that when $\sigma$ increases,
the better capital-structured companies have larger
time gap between the economic and the recorded default 
while worse capital-structured companies have 
shorter time gap between those. 
The impact of increasing $\sigma$ on both type of companies reveals in the time difference of the two default times as in Figure 4.

In a more economic sense, the parameter $\sigma$ can be interpreted as a measure of degree the macroeconomic fluctuation or market condition.
The larger $\sigma$ is, the more firms are to default 
given their original status.
As shown by Jacobson et al (2011) \cite{Jacobson}, strong evidence for a substantial and stable impact from aggregate fluctuations and business defaults are found in large banking crisis.
Moreover, default frequencies tend to increase significantly when the economy fluctuates more.
Intuitively speaking, when market conditions or macroeconomy becomes more uncertain or worse, bank or other lenders tend to be less confident
and retract their lending to firms, making firms more easily to default.
Another interesting facts about our model is that there is a ``shift'' in the distribution of firms' ``default gap classes''.
Comparing the first and the third graph in Figure 4,
it is not hard to see that the distribution of firms' default gap
tends to shift along the parabola rightwards,
lifting the right tail up while pressing the left tail down.
Moreover, it is obvious that the shifts from the classes with
larger default gap are bigger than those from the class
with smaller default gap.
This interesting phenomenon can be interpreted in a very reasonable way.
It is known to all that firms' capital structure and governance manner etc.
are very important measure of firms' strength.
In particular, these properties tend to be more variable
or of larger variance in start-up firms or less matured firms.
Baek et al. (2004) \cite{Baek}
and Ivashina and Scharfstein (2008) \cite{Ivashina}
found that firms with better governance manner and capital structure are more likely to survive from defaults during crisis.
Start-up firms or less-developed firms (lower class firms)
systematically have larger default
gaps than those larger and matured firms (higher class firms).
Good candidates in the lower class, namely those firms less-matured,
but with relatively better governance manner or reasonable capital structures, will have better access to funding or lending during crisis compared with their peers in the same class.
We expect the good candidates in each classes that are making the shift.
And the shift magnitudes are larger in the lower classes because
the variance of capital structure and governance manner are larger in these lower classes.}

\subsection{Constant Intensity}

In this section, we first present some estimation method for
solving the model parameters.
We then compare our proposed model with the real data
extracted from Guo, Jarrow and Lin (2008) \cite{Guo1}.
For the real data, Table 1 reports the time difference between the economic and recorded default date with $N=180$ days, extracted from Guo, Jarrow and Lin (2008) \cite{Guo1}.
From the table, one can easily observe that  the density function of the time
difference between the economic and recorded default time
has a ``$U$-shape''.

Regarding our model, we  assume the state process follows the two-state continuous-time Markov chain as in Example 1.
Indeed, from Eq. (\ref{e33}), we observe that
the density function of the time difference between the economic and
recorded default time is always convex.
In fact, it can be shown easily that

\begin{lemma}
The density function has a ``$U$-shape'' behavior
as long as the following conditions are satisfied:\\
(i) $ e^{-(\lambda_1+\lambda_2)N/2} \lambda_1-\lambda_2 \leq 0$\\
(ii) $0 \le \lambda_1-\lambda_2$.\\
\end{lemma}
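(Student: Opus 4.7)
The plan is to pass from the closed-form tail probability in Eq.~(\ref{e33}) to the density by differentiation, then reduce ``U-shape'' to a single convexity-plus-interior-minimum statement. Differentiating
$$P(\tau_r-\tau_e>t)=\frac{e^{-\lambda_2 t}-e^{-(\lambda_1+\lambda_2)N}e^{\lambda_1 t}}{1-e^{-(\lambda_1+\lambda_2)N}}$$
with respect to $t$ and negating gives the density
$$f(t)=\frac{\lambda_2 e^{-\lambda_2 t}+\lambda_1 e^{-(\lambda_1+\lambda_2)N}e^{\lambda_1 t}}{1-e^{-(\lambda_1+\lambda_2)N}},\qquad t\in[0,N].$$
Both summands in the numerator are positive convex exponentials, so $f$ is strictly convex on $[0,N]$. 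Therefore a ``U-shape'' on $[0,N]$ is equivalent to the unique global minimizer of $f$ on $\mathbb{R}$ lying in the interval $[0,N]$: convexity then forces $f$ to be strictly decreasing on the left of this minimizer and strictly increasing on the right.

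Next I would locate that minimizer by solving $f'(t)=0$. A direct computation yields
$$f'(t)=\frac{-\lambda_2^{2}e^{-\lambda_2 t}+\lambda_1^{2}e^{-(\lambda_1+\lambda_2)N}e^{\lambda_1 t}}{1-e^{-(\lambda_1+\lambda_2)N}},$$
and setting this to zero gives the unique critical point
$$t^{*}=N+\frac{2}{\lambda_1+\lambda_2}\ln\!\left(\frac{\lambda_2}{\lambda_1}\right).$$
By strict convexity this is the unique global minimum of $f$. It remains to translate $0\le t^{*}\le N$ into the stated conditions.

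The inequality $t^{*}\le N$ is equivalent to $\ln(\lambda_2/\lambda_1)\le 0$, i.e.\ $\lambda_1-\lambda_2\ge 0$, which is condition~(ii). The inequality $t^{*}\ge 0$ is equivalent to
$$\frac{(\lambda_1+\lambda_2)N}{2}\ge \ln\!\left(\frac{\lambda_1}{\lambda_2}\right),$$
which, after exponentiating and rearranging, becomes precisely $\lambda_1 e^{-(\lambda_1+\lambda_2)N/2}-\lambda_2\le 0$, i.e.\ condition~(i). Combining the two shows that conditions (i)--(ii) place $t^{*}\in[0,N]$, and convexity then delivers the U-shape on $[0,N]$.

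There is no real obstacle here beyond bookkeeping; the only subtle point is making sure that when the inequalities in (i) or (ii) are tight, the minimum sits at an endpoint and $f$ is then monotone on $[0,N]$, which still qualifies as a (degenerate) U-shape. I would add a brief remark handling these boundary cases and noting that condition (ii) plus $\lambda_1,\lambda_2>0$ makes the denominator $1-e^{-(\lambda_1+\lambda_2)N}$ strictly positive, so $f$ is well defined throughout the argument.
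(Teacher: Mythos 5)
Your proof is correct and complete. The paper in fact gives no proof of this lemma---it is asserted with ``it can be shown easily that''---but your argument (the density is a positive combination of two convex exponentials, hence convex on $[0,N]$, and conditions (i)--(ii) are exactly equivalent to the unique critical point $t^{*}=N+\tfrac{2}{\lambda_1+\lambda_2}\ln(\lambda_2/\lambda_1)$ lying in $[0,N]$) is precisely the computation the authors have in mind, consistent with their remark immediately before the lemma that the density is always convex; your endpoint/degeneracy remarks are a sensible addition the paper omits.
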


We remark that if $N$ is large, then
$e^{-(\lambda_1+\lambda_2)N/2} \approx 0$ and therefore
essentially the sufficient condition in the above lemma
will become $\lambda_2 \le \lambda_1$.

To estimate the model parameters,
we adopt the Maximum Log-likelihood method to estimate the desired parameter $\lambda_1$ and $\lambda_2$ (see Appendix C),
from which we obtain the estimate of the two parameters:
$$
\lambda_1=0.3631 \quad {\rm and} \quad \lambda_2=0.0238.
$$
We also present the density function of the time difference between the
economic and recorded default time with comparison of
the proposed model (Example 1) and the real data.
We note that
the two-state constant rate model does not fit the
real data very well though it can capture the important `
`$U$-shape property'' of the distribution.

\begin{table}
\caption{Time between the economic and recorded default dates}
\label{tab:1}
\begin{center}
\begin{tabular}{c c c c c c }
\hline\noalign{\smallskip}
  {\bf Day}       & $(0,18]$& $(18,36]$ & $(36,54]$ & $(54,72]$ & $(72,90]$\\
  {\bf Number of Firms} & 24    & 13      & 6       & 5       & 3       \\
\noalign{\smallskip}\hline\noalign{\smallskip}
 {\bf Day}   &$(90,108]$ & $(108,126]$& $(126,144]$& $(144,162]$ & $(162,180]$ \\
 {\bf Number of Firms}   &        1 & 4          & 4           & 2           & 11        \\
\noalign{\smallskip}
\hline
\end{tabular}
\end{center}
\end{table}

\begin{figure}
	\centering
		\resizebox{12cm}{!}{\includegraphics{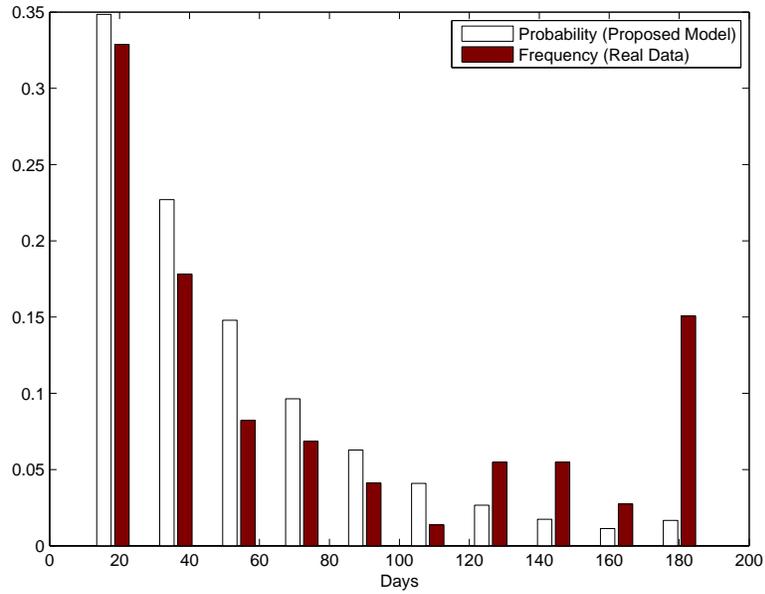}}
			\caption{A comparison of the two-state constant rate
			model and the real data.}
\end{figure}

\subsection{Stochastic Intensity}

In this example, we assume that the state process of the firm
$(S_t)_{t \geq 0}$ follows a two-state continuous-time Markov chain
with stochastic transition rates depending on the underlying process
$(X_t)_{t \geq 0}$ as described in Section 3.2.1.
By setting
$$
\mu_1=-0.52, \quad \mu_2=0, \quad  \theta=1, \quad \lambda=0.2, \quad  X_0=1, \quad N=180
$$
and
$$
B=\left(
\begin{array}{cc}
-0.9992&-0.7071\\
0.0400&-0.7071
\end{array}
\right).
$$
and vary the value of parameters $\kappa$, $\gamma$ and $\sigma$,
we compute the density function of the time between the recorded and
the economic default in Figures 2, 3 and 4.
{By setting parameters as above, the initial state is $A_X$ is given by
$$
A_X(0)=\left(
\begin{array}{cc}
-0.5000 & 0.5000\\
 0.0200 &-0.0200
\end{array}
\right).
$$ }
Figure 3 shows that as the jump size increase,
which means that the common factor suffers from a larger jump,
the difference of the two default time tends to decrease.
We demonstrate in Figure 4 that,
as the volatility of the common factor decrease,
the difference of the default times increases.

\begin{figure}
	\centering
		\resizebox*{12cm}{!}{\includegraphics{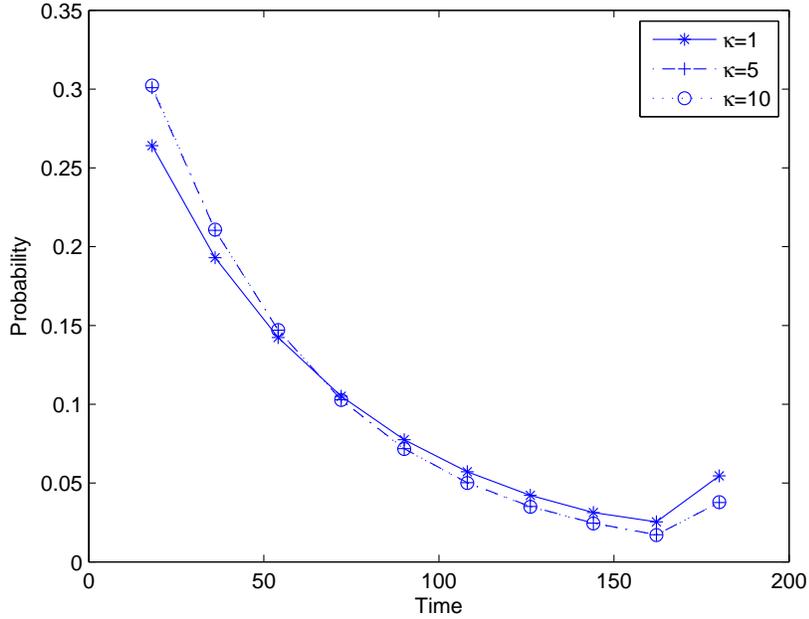}}
			\caption{Distribution of time difference between economic and recorded default with $\sigma=5, \gamma=0.1$ and different $\kappa$.}
\end{figure}

\begin{figure}
	\centering
		\resizebox*{12cm}{!}{\includegraphics{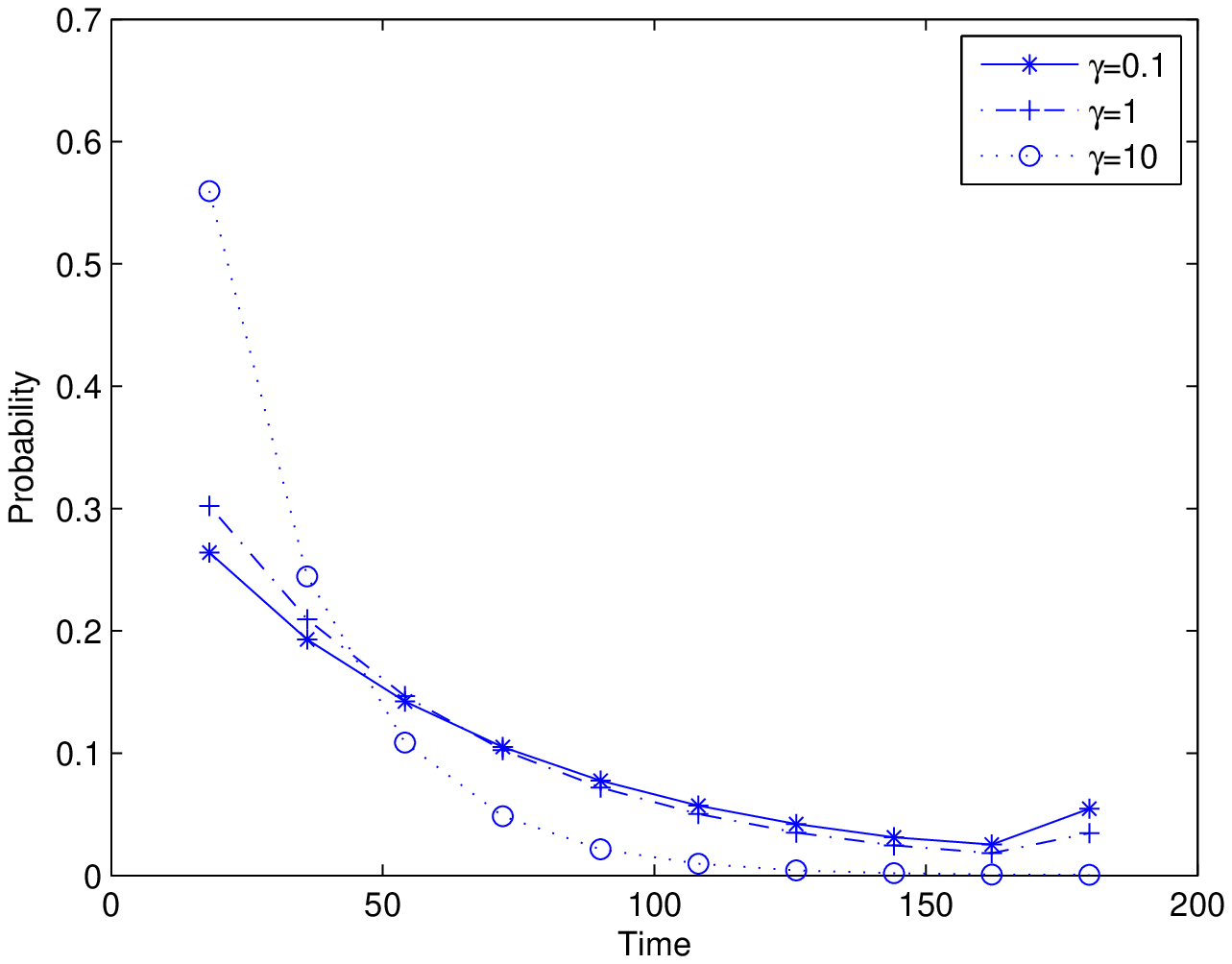}}
			\caption{Distribution of time difference between economic and recorded default with $\kappa=1, \sigma=5$ and different $\gamma$.}
\end{figure}

\begin{figure}
	\centering
		\resizebox*{12cm}{!}{\includegraphics{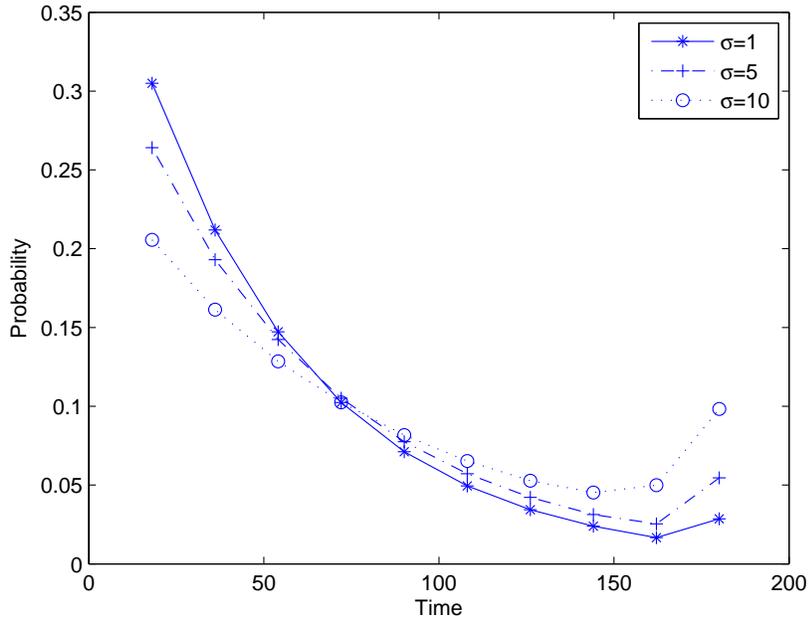}}
			\caption{Distribution of time difference between economic and recorded default with $\kappa=1, \gamma=0.1$ and different $\sigma$.}
\end{figure}

{
For the two-state stochastic transition rate model,
again we present the distribution of time difference between economic
and recorded default in Figure 5.
We assume the parameters are given by
$$
\mu_1=-0.5120,\mu_2=0,   \theta=1, \lambda=0.2, \kappa=1,  \sigma=9, \gamma=3.6,  X_0=1, N=180
$$
and
$$
B=\left(
\begin{array}{cc}
-0.9997&-0.7071\\
0.0246&-0.7071
\end{array}
\right),
$$
where the initial state of $A_X$ is given by
$$
A_X(0)=\left(
\begin{array}{cc}
-0.5000&0.5000\\
 0.0120&-0.0120
\end{array}
\right).
$$
The above set of parameters are obtained by
performing a grid search on $\kappa,\sigma$ and $\gamma$
with the object of  minimizing the mean squares of errors.
Therefore the two-state stochastic rate model fits the real data
quite well.

\begin{figure}
	\centering
		\resizebox*{12cm}{!}{\includegraphics{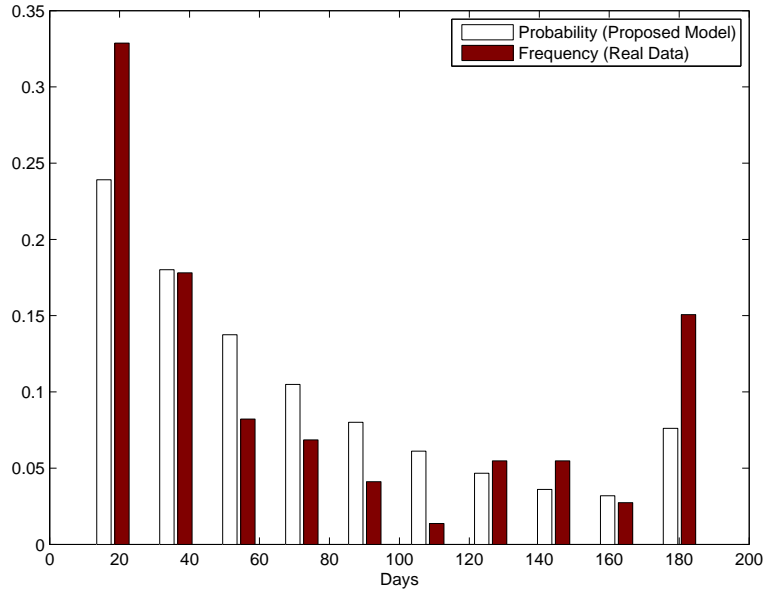}}
			\caption{Distribution of time difference between economic and recorded default with stochastic intensity.}
\end{figure}}

\section{Concluding Remarks}

In this paper, we develop a reduced-form model to
characterize the economic and recorded default time.
We assume the state process follows the continuous-time
Markov chain with stochastic transition rates depending
on the macroeconomic common factor.
We derive the probability law of $\tau_e$ and $\tau_r$ which depend on the stochastic transition matrix $P_X(s,t)$.
We also present the evaluation of $P_X(s,t)$ in different cases.
We investigate the probability distribution of the economic and recorded default time with constant transition rates and also with underlying common factor following basic affine jump diffusion.
Numerical experiments show that our proposed model can capture the features of empirical data.

The two-state constant rate model can capture the ``$U$-shape'' property
but the real data does not fit the model.
For our future research,  we shall consider a
multi-state constant rate model.
We expect the introduction of extra states can help to improve the
model and hence better fit the real data.
Regarding two-state stochastic rate model, 
we applied grid search method to obtain the model parameters.
We shall develop estimation method for the model parameters in our
future research.

\section{Appendix}

\subsection{Appendix A (Proof of Proposition \ref{Prop2})}

We note that Eq. (\ref{tau_r}) follows from Eq. (\ref{tau_e}) by using
$$
P(\tau_r =N_{i+1}\mid \mathcal{G_{\infty}})=P(\tau_e \in (N_i, N_{i+1}] \mid \mathcal{G_{\infty}}).
$$
Eq. (\ref{diff}) follows from Eq. (\ref{tau_e}) by using
$$
P(\tau_r-\tau_e > t \mid \mathcal{G_{\infty}})=\sum_{i=0}^{\infty} P(\tau_e \in (N_i, N_{i+1}-t] \mid \mathcal{G_{\infty}}).
$$
And Eq. (\ref{tau_e}) follows by
$$
\begin{array}{lll}
& &P(\tau_e \in (N_i, N_i +t] \mid \mathcal{G_{\infty}})\\
&=&\displaystyle \sum_{n_i=1}^{K-1}P(S_{N_1} \neq K, \ldots, S_{N_{i-1}} \neq K, S_{N_i} =n_i \mid \mathcal{G_{\infty}})P(\tau_e \in (N_i, N_i +t] \mid S_{N_i} =n_i, \mathcal{G_{\infty}})\\
&=&\displaystyle \sum_{n_i=1}^{K-1}\sum_{n_{i-1}=1}^{K-1}\ldots \sum_{n_{1}=1}^{K-1}P(S_{N_1} =n_1, \ldots, S_{N_{i-1}} =n_{i-1}, S_{N_i} =n_i \mid \mathcal{G_{\infty}})\\
&& \times P_X(N_i, N_i +t)_{n_i, K}\exp\left\{ -\int_{N_i+t}^{N_{i+1}} \lambda_K(X_u) du\right\}\\
&=&\displaystyle \sum_{n_i=1}^{K-1}\sum_{n_{i-1}=1}^{K-1}\ldots \sum_{n_{1}=1}^{K-1}P_X(N_0, N_1)_{S_0, n_1}\ldots P_X(N_{i-1}, N_i)_{n_{i-1}, n_i}  P_X(N_i, N_i +t)_{n_i, K}\\
&& \times\exp\left\{ -\int_{N_i+t}^{N_{i+1}} \lambda_K(X_u) du\right\}\\
&=&\left(\displaystyle \prod_{j=0}^{i-1} P^{**}_X(N_j, N_{j+1})\cdot P^*_X(N_i, N_i +t)\right)_{S_0, K}\exp\left\{ -\int_{N_i+t}^{N_{i+1}} \lambda_K(X_u) du\right\}.
\end{array}
$$

\subsection{Appendix B.1( Proof of Proposition \ref{Prop3})}
\begin{proof}
Eq.s (\ref{2nd}) and (\ref{3rd}) are obvious and
it suffices to show  Eq. (\ref{1st}).
Now we have
$$
\begin{array}{lll}\label{taue}
P(\tau_e \in(N_i, N_i+t]\mid \mathcal{G_{\infty}})
&=&\displaystyle \prod_{j=0}^{i-1}\left(m_1\exp \left[\int_{N_j}^{N_{j+1}} \mu_1(X_u) du\right] +m_2 \exp \left[\int_{N_j}^{N_{j+1}} \mu_2(X_u) du\right]\right)\\
&& \displaystyle \times \left(n_1\exp \left[\int_{N_{i}}^{N_i+t} \mu_1(X_u) du\right] +n_2 \exp \left[\int_{N_{i}}^{N_i+t} \mu_2(X_u) du\right]\right)\\
&& \displaystyle \times \exp \left[\int_{N_{i}+t}^{N_{i+1}} p_1\mu_1(X_u)+p_2\mu_2(X_u)du\right]\\
&=& \displaystyle \sum_{{\bf e} \in \hat{E}_i} \hat{m}({\bf e})\exp \left[\int_{N_0}^{N_{i+1}} \hat{\mu}({\bf e},u)du\right].
\end{array}
$$
Hence
\begin{equation}\label{ee}
P(\tau_e \in(N_i, N_i+t])
=\sum_{{\bf e} \in \hat{E}_i} \hat{m}({\bf e})
E\left(\exp\left[\int_{N_0}^{N_{i+1}} \hat{\mu}({\bf e},u)du\right]\right).
\end{equation}
For a fixed ${\bf e} \in \hat{E}_i$, let
$$
\begin{array}{lll}
R_{i+1}({\bf e})&=&p_1 \mu_1+p_2\mu_2\\
R_j({\bf e})&=&\mu_{e_j}, j=0,1,\ldots, i\\
w_{i+1}({\bf e})&=&0\\
w_i({\bf e})&=&\beta(N-t; R_{i+1}({\bf e}), w_{i+1}({\bf e}))\\
w_{i-1}({\bf e})&=&\beta(t; R_{i}({\bf e}), w_{i}({\bf e}))\\
w_j({\bf e})&=&\beta(N; R_{j+1}({\bf e}), w_{j+1}({\bf e})), j=0, 1, \ldots, i-2\\
v_{i+1}({\bf e})&=&\exp[\alpha(N-t; R_{i+1}({\bf e}), w_{i+1}({\bf e}))]\\
v_{i}({\bf e})&=&\exp[\alpha(t; R_{i}({\bf e}), w_{i}({\bf e}))]\\
v_{j}({\bf e})&=&\exp[\alpha(N; R_{j}({\bf e}), w_{j}({\bf e}))], j=0, 1, \ldots, i-1.\\
\end{array}
$$
Then we can rewrite $\hat{\mu}({\bf e}, s)$ as
$$
\hat{\mu}({\bf e}, s)=1_{\{s \in[N_i+t, N_{i+1})\}}(R_{i+1}({\bf e}) X_s)+1_{\{s \in[N_i, N_i+t)\}}(R_{i}({\bf e})X_s) +
\sum_{j=0}^{i-1} 1_{\{s \in[N_j, N_{j+1})\}}(R_j({\bf e}) X_s).
$$
Using the iterated expectation and Eq. (\ref{exp}) we obtain
$$
\begin{array}{lll}
& &E\left(\exp[\int_{N_0}^{N_{i+1}} \hat{\mu}({\bf e},u)du]\right)\\
&=&E\left(\exp[\int_{N_0}^{N_{i}+t} \hat{\mu}({\bf e},u)du]E(\exp[\int_{N_{i}+t}^{N_{i+1}} R_{i+1}({\bf e})X_u du] \mid \mathcal{G}_{N_i+t})\right)\\
&=&v_{i+1}({\bf e})E\left(\exp[\int_{N_0}^{N_{i}+t} \hat{\mu}({\bf e},u)du]  \exp[w_i({\bf e}) X_{N_i+t}]\right)\\
&=&v_{i+1}({\bf e}) E\left(\exp[\int_{N_0}^{N_{i}} \hat{\mu}({\bf e},u)du]E(\exp[\int_{N_{i}}^{N_{i}+t} R_{i}({\bf e})X_u du+w_i({\bf e}) X_{N_i+t}] \mid \mathcal{G}_{N_i})\right)\\
&=&v_{i+1}({\bf e})v_{i}(e) E\left(\exp[\int_{N_0}^{N_{i}} \hat{\mu}({\bf e},u)du] \exp[w_{i-1}({\bf e})X_{N_i}]\right)\\
&=&v_{i+1}({\bf e}) v_{i}({\bf e}) E\left(\exp[\int_{N_0}^{N_{i-1}} \hat{\mu}({\bf e},u)du]E(\exp[\int_{N_{i-1}}^{N_{i}} R_{i-1}({\bf e})X_u du+w_{i-1}({\bf e})X_{N_i}] \mid \mathcal{G}_{N_{i-1}})\right)\\
&=&v_{i+1}({\bf e}) v_{i}({\bf e}) v_{i-1}({\bf e}) E\left(\exp[\int_{N_0}^{N_{i-1}} \hat{\mu}({\bf e},u)du] \exp[w_{i-2}({\bf e})X_{N_{i-1}}]\right)\\
&=&(\prod_{j=0}^{i+1} v_j({\bf e})) \exp[\beta(N; R_0({\bf e}), w_0({\bf e})) X_0] \ {\rm (by \ iteration)}
\end{array}
$$
Hence Eq. (\ref{1st}) follows.
\end{proof}

\subsection{Appendix B.2( Proof of Proposition 4)}
\begin{proof}
We let
$$
H_i(X_0, t):=P(\tau_e \in(N_i, N_i+t])
$$
then by the proof of Proposition 3, for $i\geq 1$,
\begin{equation}\label{H}
\begin{array}{rcl}
P(\tau_e \in (N_i,N_i+t]\mid \mathcal{F}_{N_1})&=&\left(m_1\exp\left[\int_{N_0}^{N_1}\mu_1(X_u) du\right]+m_2\exp\left[\int_{N_0}^{N_1}\mu_2(X_u) du\right]\right)H_{i-1}(X_{N_1}, t)\\
H_i(X_0, t)&=&E\left[\left(m_1\exp\left[\int_{N_0}^{N_1}\mu_1(X_u) du\right]+m_2\exp\left[\int_{N_0}^{N_1}\mu_2(X_u) du\right]\right)H_{i-1}(X_{N_1}, t)\right]
\end{array}
\end{equation}
By Proposition 3, we obtain that
$$
H_0(x, t)=a_{0,1}\exp(b_{0,1}x)+a_{0,2}\exp(b_{0,2}x).
$$
Combining Eqs. (\ref{H}) and (\ref{exp}), Proposition 4 follows.
\end{proof}

\subsection{Appendix C}
Let $\delta=18$ days, $t_i=\delta i, i=0, 1, \ldots, 10$.
Let $N_i$ denote the number of firms whose time difference of economic and recorded default date is inside the interval  $(t_{i-1}, t_i]$.
Then the log-likelihood function is given by
$$
\begin{array}{lll}
\displaystyle \mathcal{L}(\lambda_1,\lambda_2) &=& \displaystyle \sum_{i=1}^{10} N_i \left(\ln \left[(e^{-\lambda_2 t_{i-1}}-e^{-\lambda_2 t_{i}})-e^{-(\lambda_1+\lambda_2)N}(e^{\lambda_1 t_{i-1}}-e^{\lambda_1 t_{i}})\right] \right. \\
&& \displaystyle \left. -\ln \left[1-e^{-(\lambda_1+\lambda_2)N} \right] \right)
\end{array}
$$
By setting
$$
\left\{
\begin{array}{lll}
\displaystyle \frac{\partial \mathcal{L}(\lambda_1,\lambda_2)}{\partial \lambda_1}=0\\
\displaystyle \frac{\partial \mathcal{L}(\lambda_1,\lambda_2)}{\partial \lambda_2}=0,
\end{array}
\right.
$$
we have two nonlinear equations for $\lambda_1$ and  $\lambda_2$.
Solving these equations numerically yields
$\lambda_1=0.3631$ and $\lambda_2=0.0238$.\\

\noindent
{\bf Acknowledgements:}
Research supported in part by GRF grants, HKU CERG grants and
HKU Hung Hing Ying Physical Research Grant.


\begin{thebibliography}{}
%
%
\bibitem{Baek}
J. Baeka, J. Kang and K. Park,
{\em Corporate governance and firm value: evidence
from the Korean financial crisis},
Journal of Financial Economics 71 (2004) 265¡V31.

\bibitem{BS}
F. Black and M. Scholes,
{\em The pricing of options and corporate liabilities},
Journal of Political Economy, 81(3), 637-654, 1973.

\bibitem{DG}
D. Duffie and N. Garleanu,
{\em Risk and valuation of collateralized debt obligations},
Financial Analysts Journal, 57(1), 41-59, 2001.

\bibitem{DK}
D. Duffie and R. Kan,
{\em A yield-factor model of interest rates},
Mathematical Finance, 6(4), 379-406, l996.

\bibitem{Guo1}
X. Guo, R. Jarrow, and H. Lin,
{\em Distressed debt prices and recovery rate estimation},
Review of Derivatives Research. 11(3), 171-204, 2008.

\bibitem{Guo2}
X. Guo, R. Jarrow, and A. de Larrard,
{\em Economic default time and the Arcsine law},
working paper, 2011, available at: http://arxiv.org/abs/1012.0843.

\bibitem{Ivashina}
V. Ivashina and D. Scharfstein,
{\em Bank Lending During the Financial Crisis of 2008},
Journal of Financial Economics, 97, 319-338, 2010.

\bibitem{Jacobson}
T. Jacobson, J. Linde and  K. Roszbach,
{\em Firm Default and Aggregate Fluctuations},
Board of Governors of the Federal Reserve System,
International Finance Discussion Papers, number 1029, working paper, 2011.

\bibitem{Lando}
D. Lando,
{\em On Cox processes and credit risky securities},
Review of Derivatives Research, 2,  99-120, 1998.

\bibitem{JT}
R. Jarrow and S. Turnbull,
{\em Pricing derivatives on financial securities subject to credit risk},
Journal of Finance, 50, 53-86, 1995.

\bibitem{MU}
D. Madan and H. Unal,
{\em Pricing the risks of default},
Review of Derivatives Research, 2(2-3), 121-160, 1998.

\bibitem{Merton}
R.C. Merton,
{\em On the pricing of corporate debt: the risk structure of interest rates},
Journal of Finance, 29(2), 449-470, 1974.


\end{thebibliography}


\end{document}